\documentclass[aps,prd,preprint,notitlepage,nofootinbib]{revtex4-2}

\usepackage{amsmath,amssymb,amsthm,mathtools,bm}
\usepackage{hyperref}
\usepackage{physics}
\usepackage{graphicx}
\usepackage{enumitem}
\usepackage{placeins}
\usepackage[T1]{fontenc}
\usepackage{lmodern}

\usepackage{xcolor}
\definecolor{linkblack}{RGB}{20,20,20}
\definecolor{citeblue}{RGB}{0,70,140}
\definecolor{urlblue}{RGB}{0,90,120}
\usepackage{float}
\hypersetup{
  colorlinks=true,
  linkcolor=linkblack,
  citecolor=citeblue,
  urlcolor=urlblue
}

\newtheorem{proposition}{Proposition}

\begin{document}

\title{Anomaly-driven evaporation endpoints of a two-dimensional regular black hole}

\author{Damien A. Easson}
\email{easson@asu.edu}
\affiliation{Department of Physics, Arizona State University, Tempe, Arizona 85287, USA}
\affiliation{Beyond Center for Fundamental Concepts in Science, Arizona State University, Tempe, Arizona 85287, USA}

\begin{abstract}
Spherical reduction of four-dimensional minimally coupled matter yields a
two-dimensional theory with dilaton-coupled matter rather than minimally
coupled conformal matter. We use this distinction to revisit the backreacted late-time endpoint problem
for the regular two-dimensional Bardeen-like black hole considered by Barenboim, Frolov,
and Kunstatter. Replacing the Polyakov quantum sector by
the dilaton-coupled anomaly model of Fabbri, Farese, and Navarro-Salas (FFN), we derive the corresponding semiclassical field equations and classify the
asymptotically allowed late branches at finite radius. For any quiescent finite-radius branch with finite nonzero conformal factor, the late-time
mixed equation enforces \(J'(r_\infty)=0\), and hence
\(r_\infty=\sqrt{2}\,\ell\), independently of the local dilaton-anomaly
convention. For finite-radius null branches satisfying the stated state-tail
assumptions, the ordinary strong-cosmic-censorship-restoring exponential null
boundary is excluded. Generic power-law branches \(e^{2\rho}\sim v^{-p}\) with \(p>1\) are likewise
excluded, except for the borderline case \(p=2\), which is the only remaining
null loophole of this type. In the FFN model, the settled realization of this
loophole carries finite affine flux and requires the stronger state-tail decay
\(s_\phi=O(v^{-2})\).  
The natural finite-radius outcome is
remnant-like, while the surviving null branch is a highly constrained soft-null
alternative.
\end{abstract}

\maketitle

\newpage

\section{Introduction}

Regular black holes provide a natural laboratory for studying nonsingular
evaporation endpoints~\cite{Bardeen:1968,AyonBeato:2000zs,Hayward:2005gi},
while two-dimensional semiclassical models offer a controlled setting for
analyzing backreaction explicitly~\cite{Callan:1992rs,Russo:1992ax}. A recent
example is the Bardeen-like model of Barenboim, Frolov, and Kunstatter
(BFK)~\cite{Barenboim:2025bfk}, where the semiclassical dynamics were studied
numerically using the standard two-dimensional conformal anomaly, equivalently
the Polyakov description for minimally coupled two-dimensional matter
~\cite{Polyakov:1981rd}. Their simulations exhibit trapped and anti-trapped
regions and, for microscopic black holes, end in a spacetime free of apparent
and Cauchy horizons.

This Polyakov-based result is highly suggestive, but spherical reduction of a
minimally coupled four-dimensional scalar gives a nonminimally coupled
two-dimensional matter model with explicit dilaton coupling, not the minimally
coupled conformal matter described by the Polyakov sector
\cite{Mukhanov:1994ax,Kummer:1997ys,Grumiller:2002nm}. This is significant because the
dilaton-coupled quantum stress tensor is not fixed by the trace anomaly alone:
its state-dependent functions obey coupled nonchiral equations. Related
four-dimensional anomaly-induced studies likewise emphasize the importance of
stress-tensor structure and state selection in dynamical Hawking emission and
backreaction~\cite{Lowe:2025tik,Lowe:2026kvi}.

The present analysis also fits within a broader anomaly-driven
two-dimensional endpoint program, where reduced Polyakov models have led to
analytic nonequilibrium endpoint dynamics in multi-horizon black-hole
evaporation~\cite{Easson:2025ekn,Easson:2026kng}. Here we ask whether the BFK endpoint picture is stable once the quantum model is
replaced by the dilaton-coupled anomaly structure suggested by spherical
reduction.
We therefore revisit the endpoint problem for the Bardeen-like geometry using
the concrete anomaly-driven stress tensor of Fabbri, Farese, and Navarro-Salas
(FFN)~\cite{Fabbri:2003vy}. This model retains explicit dilaton-dependent stress-tensor contributions and
the associated nonchiral state equations absent from the minimally coupled
Polyakov description.

We proceed analytically, focusing on the asymptotic branches selected by the
dilaton-coupled system. We show that quiescent finite-radius branches obey the selector
\(J'(r_\infty)=0\), giving \(r_\infty=\sqrt2\,\ell\), where \(\ell\) is the
Bardeen core scale. We further show that the familiar SCC-restoring null blow-up and generic
power-law null branches \(e^{2\rho}\sim v^{-p} \), \(p\neq2 \), are excluded
within the stated asymptotic assumptions. The borderline $p=2$ case
survives only as a constrained soft-null loophole. Thus the endpoint problem is
reclassified into a benign finite-radius remnant-like branch and a softer null
alternative. While global endpoint selection remains a time-dependent backreaction and
state-selection problem, the local asymptotics already show that the Polyakov
picture is not robust once the dilaton-coupled anomaly is used.

\section{The Bardeen model}

We work with the metric in double-null gauge,
\begin{equation}
ds^2=-e^{2\rho(u,v)}\,du\,dv .
\end{equation}
The Bardeen-like classical model is specified by the function \(J(r)\), where
\(r(u,v)\) is the areal-radius/dilaton field:
\begin{equation}
J(r)=\frac{(r^2+\ell^2)^{3/2}}{r^2}.
\end{equation}
In the BFK frame~\cite{Barenboim:2025bfk}, the classical action may be written as
\begin{equation}
S_G=\frac{1}{2G}\int d^2x\,\sqrt{-g}\,
\Big[\Phi(r)R+\Phi''(r)(\nabla r)^2+\Phi''(r)\Big],
\qquad
J(r)=\Phi'(r).
\end{equation}
Thus, \(\Phi(r)\) is an antiderivative of the model function \(J(r)\), defined
up to an irrelevant constant.

The resulting field equation is
\begin{equation}
-J\big(\nabla_\mu\nabla_\nu r-g_{\mu\nu}\nabla^2 r\big)
+\frac{J'}{2}g_{\mu\nu}(\nabla r)^2
-\frac{J'}{2}g_{\mu\nu}
=G\,T_{\mu\nu}.
\end{equation}
We use the standard matter stress-tensor convention
\begin{equation}
\delta S_m=-\frac12\int d^2x\,\sqrt{-g}\,
T_{\mu\nu}\,\delta g^{\mu\nu}.
\end{equation}
In the semiclassical equations below, \(T_{\mu\nu}\) denotes the renormalized
expectation value \(\langle T_{\mu\nu}\rangle\) in the chosen quantum state.
The state dependence is encoded in the functions \(t_u\), \(t_v\), and
\(s_\phi\).

In our gauge,
\begin{equation}
g_{uv}=-\frac12 e^{2\rho},\qquad g_{uu}=g_{vv}=0,
\end{equation}
and the useful geometric identities are
\begin{equation}
\begin{aligned}
\nabla_u\nabla_u r &= r_{uu}-2\rho_u r_u,
&\qquad
\nabla_v\nabla_v r &= r_{vv}-2\rho_v r_v,\\
\nabla_u\nabla_v r &= r_{uv},
&\qquad
(\nabla r)^2 &= -4e^{-2\rho}r_u r_v,\\
\nabla^2 r &= -4e^{-2\rho}r_{uv}.
\end{aligned}
\end{equation}
The mixed component and the $vv$ null component give
\begin{equation}
J\,r_{uv}+J' r_u r_v+\frac14 e^{2\rho}J' = G\,T_{uv},
\label{eq:mixed-exact}
\end{equation}
and
\begin{equation}
-J\,(r_{vv}-2\rho_v r_v)=G\,T_{vv},
\label{eq:vv-exact}
\end{equation}
with the $uu$ equation obtained by $u\leftrightarrow v$.

All endpoint statements below are obtained by inserting late-time asymptotic ans\"atze into these exact equations and comparing powers of $v$.

\section{Dilaton-coupled anomaly}

We model the quantum contribution using the dilaton-coupled anomaly model of
FFN~\cite{Fabbri:2003vy}. In our spherically reduced normalization,
the dilaton \(\phi\) is defined by
\begin{equation}
e^{-2\phi}=r^2,
\qquad
\phi=-\ln r.
\label{eq:phidef}
\end{equation}

For orientation, the minimally coupled Polyakov sector is generated by the standard
nonlocal effective action~\cite{Polyakov:1981rd,Birrell:1982ix},
\begin{equation}
S_{\rm P}=-\frac{N}{96\pi}\int d^2x\,\sqrt{-g}\;R\,\Box^{-1}R.
\end{equation}
The associated two-dimensional trace-anomaly stress-tensor construction underlies
the standard Hawking-flux analysis~\cite{Christensen:1977jc}.
By contrast, spherical reduction of a four-dimensional minimally coupled scalar
\begin{equation}
S_m=-\frac{1}{8 \pi}\int d^4x\,\sqrt{-g}\,(\nabla f)^2,
\end{equation}
with metric $ds_{(4)}^2 =ds_{(2)}^2  + B(r) d\Omega^2$,
where $B(r)=e^{-2\phi}=r^2$ leads to a dilaton-coupled two-dimensional matter action of the form
\begin{equation}
S_m=-\frac12\int d^2x\,\sqrt{-g}\,B(r)(\nabla f)^2.
\end{equation}
Thus, the anomaly acquires an explicit $\phi$-dependence, involving terms built from $R$, $(\nabla\phi)^2$, and $\Box\phi$. The precise dilaton-dependence and associated one-loop effective action are
known to be subtle: the literature contains both historical disagreements over
local anomaly terms, and more recent constructions emphasizing Weyl-invariant
ambiguities and state-dependent stress tensors
\cite{Bousso:1997wi,Kummer:1997ys,Kummer:1998sp,Nojiri:1997ab,
Nojiri:1997gf,Fabbri:2003vy,Wu:2023uyb}.
For these reasons, we work directly with the FFN stress tensor and its
associated state equations~\cite{Fabbri:2003vy}, rather than assuming a unique
induced anomaly action.

In conformal gauge, with \(t_v(u,v)\) denoting the nonchiral
state-dependent function associated with the \(vv\) component, the relevant
stress-tensor components are
\begin{equation}
\langle T_{vv}\rangle
=-\frac{1}{12\pi}t_v
-\frac{1}{12\pi}(\rho_v^2-\rho_{vv})
+\frac{1}{2\pi}(\rho_v\phi_v+\rho\,\phi_v^2),
\end{equation}
\begin{equation}
\langle T_{uv}\rangle
=-\frac{1}{12\pi}(\rho_{uv}+3\phi_u\phi_v-3\phi_{uv}).
\end{equation}

The stress tensor is in single-field normalization. For \(N\)
identical dilaton-coupled scalar fields, the stress tensor is multiplied
by \(N\); equivalently, after insertion into the semiclassical field equations,
one replaces \(G\) by \(GN\) in the anomaly-induced terms.

Using \eqref{eq:phidef} we have
\begin{equation}
3\phi_u\phi_v-3\phi_{uv}=3\frac{r_{uv}}{r},
\end{equation}
so that
\begin{equation}
\langle T_{uv}\rangle
=-\frac{1}{12\pi}\left(\rho_{uv}+3\frac{r_{uv}}{r}\right).
\end{equation}
The exact mixed equation becomes
\begin{equation}
\left(J+\frac{G}{4\pi r}\right)r_{uv}
+\frac{G}{12\pi}\rho_{uv}
+J' r_u r_v
+\frac14 e^{2\rho}J'=0.
\label{eq:mixed-ffns}
\end{equation}
Likewise,
\begin{equation}
J(r_{vv}-2\rho_v r_v)
=
\frac{G}{12\pi}t_v
+\frac{G}{12\pi}(\rho_v^2-\rho_{vv})
+\frac{G}{2\pi}\frac{\rho_v r_v}{r}
-\frac{G}{2\pi}\rho\,\frac{r_v^2}{r^2}.
\label{eq:vv-ffns}
\end{equation}
Equation \eqref{eq:vv-ffns} is simply the $vv$ constraint \eqref{eq:vv-exact} rewritten after substitution of the stress tensor.
The exact state equations are
\begin{equation}
\partial_u t_v + 3\,\partial_v\!\left(\frac{r_{uv}}{r}\right)-6\frac{r_v}{r}\,s_\phi=0,
\label{eq:state-v}
\end{equation}
\begin{equation}
\partial_v t_u + 3\,\partial_u\!\left(\frac{r_{uv}}{r}\right)-6\frac{r_u}{r}\,s_\phi=0,
\label{eq:state-u}
\end{equation}
where \(s_\phi\) is a rescaled version of the state-dependent dilaton
source, defined by
\begin{equation}
-\frac{1}{2\pi}s_\phi
\equiv
\left\langle \frac{\delta S}{\delta\phi}\right\rangle_{\rho=0}.
\end{equation}

\section{Late-time selector and asymptotic endpoints}
\label{sec:selector}

We now derive the local late-time endpoint classification: an
ambiguity-independent selector for quiescent finite-radius branches, a
conditional exclusion of the ordinary SCC-restoring null branch, and the special
\(p=2\) null-boundary loophole.

\subsection{Late-time selector}

To allow for the historical ambiguity in the local dilaton-coupled anomaly~\cite{Bousso:1997wi,Kummer:1997ys,Nojiri:1997ab,Nojiri:1997gf,Fabbri:2003vy}, it is convenient to work with the one-parameter family
\begin{equation}
\label{eq:trace-family}
\langle T^\mu{}_{\mu}\rangle_\alpha
=
c\Bigl(R-6(\nabla\phi)^2+\alpha\,\Box\phi\Bigr),
\end{equation}
where \(c>0\) is the overall anomaly coefficient and \(\alpha\) is a constant
ambiguity parameter. In terms of $r$, this becomes
\begin{equation}
\label{eq:trace-family-r}
\langle T^\mu{}_{\mu}\rangle_\alpha
=
c\left(
R-\alpha\frac{\Box r}{r}
+(\alpha-6)\frac{(\nabla r)^2}{r^2}
\right).
\end{equation}
With our convention $R=8e^{-2\rho}\rho_{uv}$, the corresponding local contribution to the mixed component is
\begin{equation}
\label{eq:Tuv-alpha}
G\,T_{uv}^{(\alpha)}
=
-\gamma\,\rho_{uv}
-\frac{\alpha\gamma}{2}\frac{r_{uv}}{r}
+\frac{(\alpha-6)\gamma}{2}\frac{r_u r_v}{r^2},
\end{equation}
where $\gamma \equiv 2Gc >0$ absorbs the overall normalization. Combining \eqref{eq:Tuv-alpha} with \eqref{eq:mixed-exact} gives
\begin{equation}
\label{eq:uv-alpha}
\left(J+\frac{\alpha\gamma}{2r}\right)r_{uv}
+
\left(
J'-\frac{(\alpha-6)\gamma}{2r^2}
\right)r_u r_v
+\gamma\,\rho_{uv}
+\frac14 e^{2\rho}J'
=0.
\end{equation}

\begin{proposition}[late-time selector]
\label{prop:selector}
Assume that the anomaly-driven model admits a late outgoing branch such that, as $v\to\infty$,
\begin{equation}
\label{eq:quiescent-assumptions}
r(u,v)\to r_\infty(u)>0,
\qquad
r_v\to0,
\qquad
r_{uv}\to0,
\qquad
\rho_{uv}\to0,
\qquad
r_u r_v\to0,
\end{equation}
and
\begin{equation}
\label{eq:finite-e2rho}
e^{2\rho(u,v)}\to E(u),
\qquad
0<E(u)<\infty.
\end{equation}
Then
\begin{equation}
\label{eq:selector}
J'(r_\infty)=0.
\end{equation}
For the Bardeen model~\footnote{The same selector mechanism applies to any regular model written in
the BFK form with a smooth model function \(J(r)\). For example, for a
Hayward-type model~\cite{Hayward:2005gi} with
\(J_H(r)=(r^3+\ell^3)/r^2\), one finds
\(J_H'(r)=1-2\ell^3/r^3\), so \(J_H'(r_\infty)=0\) selects
\(r_\infty=2^{1/3}\ell\). When \(J(r)\) has a unique positive minimum, the
selector picks that radius; in the associated static family this is the
extremal radius.},
\begin{equation}
\label{eq:bardeen-selector}
r_\infty=\sqrt{2}\,\ell.
\end{equation}
\end{proposition}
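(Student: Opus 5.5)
The plan is to pass to the limit $v\to\infty$ at fixed $u$ in the exact mixed equation \eqref{eq:uv-alpha}, which holds for every value of the ambiguity parameter $\alpha$ (the FFN case \eqref{eq:mixed-ffns} corresponds to $\alpha=6$, where the $r_ur_v/r^2$ anomaly term drops out). Since $r\to r_\infty(u)>0$ and $J$ is smooth on $(0,\infty)$, the coefficients $J(r)+\alpha\gamma/(2r)$ and $J'(r)-(\alpha-6)\gamma/(2r^2)$ converge to the finite numbers $J(r_\infty)+\alpha\gamma/(2r_\infty)$ and $J'(r_\infty)-(\alpha-6)\gamma/(2r_\infty^2)$. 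Positivity of $r_\infty$ is used here: it keeps the $1/r$ and $1/r^2$ anomaly coefficients bounded.

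The steps, in order, are as follows.
\begin{enumerate}
\item Multiply the convergent coefficients by the quantities assumed in \eqref{eq:quiescent-assumptions} to vanish, namely $r_{uv}\to0$ and $r_ur_v\to0$. Together with $\gamma\rho_{uv}\to0$, this shows that the first three terms of \eqref{eq:uv-alpha} tend to zero.
\item Since the left-hand side of \eqref{eq:uv-alpha} is identically zero, the remaining term must also tend to zero: $\tfrac14 e^{2\rho}J'(r)\to0$.
\item By \eqref{eq:finite-e2rho} and continuity of $J'$, this term tends to $\tfrac14 E(u)J'(r_\infty(u))$. Because $0<E(u)<\infty$, we can divide by $E(u)$ and obtain \eqref{eq:selector}.
\end{enumerate}
The argument does not depend on $\alpha$ or $\gamma$, which is the claimed convention independence.

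For the Bardeen function, I would compute $J'$ directly:
\[
J'(r)=\frac{3r(r^2+\ell^2)^{1/2}}{r^2}-\frac{2(r^2+\ell^2)^{3/2}}{r^3}=\frac{(r^2+\ell^2)^{1/2}\,(r^2-2\ell^2)}{r^3}.
\]
The prefactor $(r^2+\ell^2)^{1/2}/r^3$ is strictly positive for $r>0$, so the unique positive zero is $r_\infty=\sqrt2\,\ell$. This also shows $r_\infty$ is independent of $u$.

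The only delicate point is the justification of the limits rather than any computation. One must note that $r_ur_v\to0$ is assumed separately, because $r_u$ need not stay bounded. The finiteness and nonvanishing of $E$ is exactly what prevents the degenerate balances $e^{2\rho}\to0$ or $e^{2\rho}\to\infty$; those are the null branches treated later, and they are excluded here by hypothesis.
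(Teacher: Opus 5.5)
Your proposal is correct and follows the paper's own proof. You pass to the limit $v\to\infty$ in the $\alpha$-family mixed equation \eqref{eq:uv-alpha}, drop the derivative terms using the quiescence hypotheses, divide the surviving $\tfrac14 E(u)J'(r_\infty)$ by $E(u)>0$, and factor the Bardeen $J'$ to isolate $r=\sqrt2\,\ell$; your extra remarks (bounded coefficients from $r_\infty>0$, the separate role of $r_ur_v\to0$, and FFN as the case $\alpha=6$, $\gamma=G/(12\pi)$) only spell out what the paper leaves implicit.
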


\begin{proof}
Under the assumptions \eqref{eq:quiescent-assumptions}--\eqref{eq:finite-e2rho}, the derivative terms in \eqref{eq:uv-alpha} vanish as $v\to\infty$, and the equation reduces to
\begin{equation}
\frac14 E(u)\,J'(r_\infty)=0.
\end{equation}
Since $E(u)>0$, this implies \eqref{eq:selector}. For Bardeen,
\begin{equation}
J'(r)=\frac{(r^2-2\ell^2)\sqrt{r^2+\ell^2}}{r^3}.
\end{equation}
The unique positive zero is therefore \(r=\sqrt{2}\ell\), which proves
\eqref{eq:bardeen-selector}.

\end{proof}

The key point is that the selector \eqref{eq:selector} is independent of the
ambiguity parameter \(\alpha\) in \eqref{eq:trace-family} and, therefore, the picked-out radius
is not merely an artifact of one preferred local anomaly convention;
within the BFK form, its value is fixed by the stationary points of \(J(r)\). On a quiescent finite-radius branch
with finite nonzero \(e^{2\rho}\), the derivative and flux terms in the mixed
equation become asymptotically subleading, so the remaining balance is governed
by the classical potential term \(\frac14 e^{2\rho}J'\). Within this asymptotic
class, the endpoint radius must therefore lie at a stationary point of the
original Bardeen function \(J(r)\). The dilaton anomaly enters through the
allowed approach to this radius and it changes which asymptotic branches are
consistent and fixes the correlated state tails needed to support the surviving
null alternatives. Null endpoints with \(e^{2\rho}\to0\) are
classified separately below.

\subsection{Conditional exclusion of the SCC-restoring null branch}

We now show that the usual nonextremal SCC-restoring null branch is
asymptotically incompatible with finite-radius settling, unless the ingoing
state sector supplies a competing leading tail. Using the FFN stress tensor
in the \(vv\) constraint \eqref{eq:vv-exact}, the explicit dilaton
contributions near a settling finite-radius branch are suppressed by factors
of \(r_v\), while the state term \(t_v\) must be independently controlled. Under
our asymptotic state assumptions below, the leading behavior is of
Polyakov type,
\begin{equation}
\label{eq:Tvv-leading}
T_{vv}\sim -\frac{1}{12\pi}\bigl(\rho_v^2-\rho_{vv}\bigr).
\end{equation}

\begin{proposition}[conditional exclusion of the ordinary SCC branch]
\label{prop:scc-exclusion}
Consider a late finite-radius branch satisfying
\begin{equation}
r(u,v)\to r_\infty(u)>0,
\qquad
r_v\to0,
\qquad
r_{vv}\to0
\qquad (v\to\infty).
\end{equation}
Suppose the null-boundary asymptotics below are differentiable, so that their
leading forms may be differentiated term by term, and suppose the explicit
dilaton terms in \eqref{eq:vv-ffns} are subleading relative to the
Polyakov-type term. More explicitly, in the power-law case we require
\begin{equation}
\frac{\rho_v r_v}{r}=o(v^{-2}),
\qquad
\rho\,\frac{r_v^2}{r^2}=o(v^{-2}),
\end{equation}
while in the exponential case the corresponding explicit dilaton terms are
\(o(1)\).

Then:
\begin{enumerate}
\item If
\begin{equation}
e^{2\rho}\sim e^{-\beta v},
\qquad \beta>0,
\end{equation}
and the ingoing state tail satisfies \(t_v=o(1)\), this branch is excluded.

\item If
\begin{equation}
e^{2\rho}\sim v^{-p},
\qquad p>1,
\end{equation}
and the ingoing state tail satisfies \(t_v=o(v^{-2})\), this branch is excluded
for all \(p\neq2\).
\end{enumerate}
Within this finite-radius exponential/power-law class, the only surviving
null-boundary loophole is the special \(p=2\) branch,
\begin{equation}
\label{eq:p2-loophole}
e^{2\rho}\sim \frac{A(u)}{v^2}.
\end{equation}
\end{proposition}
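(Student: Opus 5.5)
The plan is to work entirely with the $vv$ constraint \eqref{eq:vv-ffns}. I multiply it through by suitable powers of $v$ and compare the leading asymptotic orders of its two sides. The left-hand side $J(r_{vv}-2\rho_v r_v)$ involves only derivatives of $r$ multiplied by the finite factor $J(r_\infty)$. Since $r_\infty>0$, $J$ is smooth and bounded near $r_\infty$, and $r$ is bounded away from zero. On the right-hand side the explicit dilaton terms $\rho_v r_v/r$ and $\rho\, r_v^2/r^2$ are subleading by hypothesis, and $t_v$ is subleading by the stated state-tail assumption. The right-hand side is then controlled by the Polyakov-type combination $\frac{G}{12\pi}(\rho_v^2-\rho_{vv})$, in line with \eqref{eq:Tvv-leading}. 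Each case will force this combination to vanish at leading order, and that condition is incompatible with the assumed null asymptotics except when $p=2$.

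\emph{Exponential case.} Differentiating the leading form term by term gives $\rho\simeq -\beta v/2+\text{const}$. Hence $\rho_v\to-\beta/2$ and $\rho_{vv}\to0$, so that $\rho_v^2-\rho_{vv}\to\beta^2/4\neq0$. On the left, $r_{vv}\to0$, and $\rho_v r_v\to0$ because $\rho_v$ is bounded and $r_v\to0$. The left-hand side is therefore $o(1)$. On the right, $t_v=o(1)$ and the dilaton terms are $o(1)$ by hypothesis. The constraint then reduces to $\frac{G}{12\pi}\frac{\beta^2}{4}=o(1)$, which is a contradiction since $\beta>0$. This excludes the branch.

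\emph{Power-law case.} Term-by-term differentiation of $\rho\simeq-\tfrac p2\ln v$ gives $\rho_v\simeq -p/(2v)$ and $\rho_{vv}\simeq p/(2v^2)$. Hence
\begin{equation*}
\rho_v^2-\rho_{vv}\simeq \frac{p(p-2)}{4v^2}.
\end{equation*}
The step I expect to be the main obstacle is showing that the left-hand side is $o(v^{-2})$, because the hypotheses give only $r_{vv}\to0$ directly. The $\rho_v r_v$ piece is straightforward: since $r\to r_\infty>0$, the hypothesis $\rho_v r_v/r=o(v^{-2})$ already gives $\rho_v r_v=o(v^{-2})$. For $r_{vv}$, I would use that hypothesis together with $\rho_v\simeq -p/(2v)$, $p>1$, to get $r_v=o(v^{-1})$. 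The differentiability assumption on the leading asymptotic forms then lets me differentiate this once more to obtain $r_{vv}=o(v^{-2})$. As a consistency check, a tail $r_{vv}\sim c\,v^{-2}$ with $c\neq0$ would integrate to $r_v\sim -c/v$, which contradicts $r_v=o(v^{-1})$. Combining these with $t_v=o(v^{-2})$ and the dilaton bounds, the constraint reduces to $\frac{G}{12\pi}\frac{p(p-2)}{4v^2}=o(v^{-2})$. This forces $p(p-2)=0$, and with $p>1$ the only possibility is $p=2$.

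Finally, the surviving case gives \eqref{eq:p2-loophole}. At $p=2$ the Polyakov combination vanishes identically at order $v^{-2}$, so no contradiction arises at this order. A $u$-dependent prefactor $A(u)$ is allowed because the constraint involves only $v$-derivatives. This shows that $e^{2\rho}\sim A(u)/v^2$ is the unique null-boundary loophole within the exponential/power-law class.
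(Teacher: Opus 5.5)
Your exponential case is essentially the paper's argument. In the power-law case you take a different route.

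\emph{The paper's route.} It never pre-bounds $r_v$. It keeps $-2J\rho_v r_v=-(pJ/v)\,r_v$ on the left and reads $-J(r_\infty)r_{vv}-(pJ(r_\infty)/v)\,r_v\sim A_p/v^2$ as a first-order linear ODE for $r_v$. With integrating factor $v^p$, which is where $p>1$ is used, this gives $r_v\sim c(u)/v$ with $c\neq0$. Hence $r\sim r_\infty+c\ln v$, contradicting finite-radius settling.

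\emph{Your route.} You use the hypothesis $\rho_v r_v/r=o(v^{-2})$ a second time, together with $\rho_v\simeq-p/(2v)$, to get $r_v=o(v^{-1})$ up front. The left-hand side is then subleading, so the Polyakov coefficient $p(p-2)$ must vanish on its own.

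\emph{Comparison.} Your version is shorter. It also shows that the behavior $r_v\sim c/v$ derived in the paper already violates the stated dilaton hypothesis. The paper's version exhibits the physical obstruction: a logarithmic drift of the radius driven by the nonvanishing Polyakov flux. It closes the contradiction using only the finite limit of $r$, rather than relying on the $\rho_v r_v/r$ bound to control the left-hand side.

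One step needs tightening. You cannot get $r_{vv}=o(v^{-2})$ by differentiating $r_v=o(v^{-1})$. Little-$o$ bounds do not differentiate, and the proposition's differentiability hypothesis concerns the null asymptotics of $\rho$, not of $r$. Your own consistency check, used in the right order, repairs this. Solve the constraint for $r_{vv}$, using $J>0$, $\rho_v r_v=o(v^{-2})$, and the subleading $t_v$ and dilaton terms:
\begin{equation*}
r_{vv}=\frac{G\,p(p-2)}{48\pi J(r_\infty)\,v^2}+o(v^{-2}).
\end{equation*}
Integrating from $v$ to $\infty$ with $r_v\to0$ gives
\begin{equation*}
r_v=-\frac{G\,p(p-2)}{48\pi J(r_\infty)\,v}+o(v^{-1}).
\end{equation*}
This is compatible with $r_v=o(v^{-1})$ only if $p(p-2)=0$, i.e.\ $p=2$.
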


\begin{proof}
For the exponential case,
\begin{equation}
e^{2\rho}\sim e^{-\beta v}
\quad\Longrightarrow\quad
\rho_v\to-\frac{\beta}{2},
\qquad
\rho_{vv}\to0.
\end{equation}
By the regularity and state-tail hypotheses, the explicit dilaton terms and
state contribution are both \(o(1)\). Hence
\begin{equation}
T_{vv}\to -\frac{\beta^2}{48\pi}\neq0.
\end{equation}
But \(J(r_\infty)\) is finite and \(\rho_v\) is bounded, so the left-hand side
of \eqref{eq:vv-exact} tends to zero when \(r\to r_\infty\) with
\(r_v,r_{vv}\to0\), giving a contradiction. Hence the exponential branch is
excluded.

For the power-law case,
\begin{equation}
e^{2\rho}\sim v^{-p}
\quad\Longrightarrow\quad
\rho_v=-\frac{p}{2v},
\qquad
\rho_{vv}=\frac{p}{2v^2}.
\end{equation}
By the regularity hypothesis, the explicit dilaton terms in \eqref{eq:vv-ffns}
are \(o(v^{-2})\), and the hypothesis \(t_v=o(v^{-2})\) removes any competing
state tail at the same order. Therefore
\begin{equation}
\label{eq:Tvv-power}
T_{vv}\sim \frac{p(2-p)}{48\pi\,v^2}.
\end{equation}
For \(p\neq2\), inserting \eqref{eq:Tvv-power} into \eqref{eq:vv-exact} yields
the asymptotic balance
\begin{equation}
-J(r_\infty)\,r_{vv}
-\frac{pJ(r_\infty)}{v}\,r_v
\sim \frac{A_p(u)}{v^2},
\end{equation}
with \(A_p(u)\neq0\) for \(p\neq2\). Solving this asymptotically gives
\begin{equation}
r_v\sim \frac{c(u)}{v},
\qquad
c(u)\neq0,
\end{equation}
and hence
\begin{equation}
r(u,v)\sim r_\infty(u)+c(u)\ln v,
\end{equation}
contradicting the assumed finite limit \(r\to r_\infty\). Thus all
\(p>1\) with \(p\neq2\) are excluded.
\end{proof}

The exceptional nature of the \(p=2\) case is already visible at the level of
the local Polyakov-type term: for \(e^{2\rho}\sim v^{-p}\),
\[
\rho_v^2-\rho_{vv}=\frac{p(p-2)}{4v^2}.
\]
Thus any perturbation of the power away from \(p=2\) restores the \(v^{-2}\)
term that drives the logarithmic finite-radius obstruction in
Proposition~\ref{prop:scc-exclusion}. The exceptional value is therefore not
specific to the FFN dilaton coupling; it follows from the universal
Polyakov-type null combination \(\rho_v^2-\rho_{vv}\). What \emph{is} model-dependent
is the fate of the branch after this leading cancellation: in the FFN system,
the subleading dilaton terms and state equations reduce it to the constrained
soft-null sector analyzed below. What remains open is whether the correlated
state tails required by this branch can arise from global collapse-normalized
state data.

Proposition~\ref{prop:scc-exclusion} therefore gives a conditional asymptotic
exclusion of the ordinary SCC-restoring null singularity at finite radius.
Establishing the required decay of \(t_v\) directly from global
collapse-normalized state data is a separate state-selection problem.
This exclusion is distinct from a full Cauchy-horizon stability
analysis. It removes the local finite-radius null branches with the usual
affine amplification familiar from Cauchy-horizon and mass-inflation
analyses~\cite{Poisson:1990eh,Ori:1991zz,Dafermos:2003wr}, while leaving the
global horizon structure of the dilaton-coupled anomaly model as a separate
problem. In the reduced Polyakov model, the corresponding affine flux coefficient can
be characterized directly in state space~\cite{Easson:2025uca}; here the point
is that the FFN late-time equations obstruct this local branch under the stated
state-tail conditions.
Our resulting branch classification is summarized in
Fig.~\ref{fig:endpoint-classification}.

\begin{figure}[t]
\centering
\includegraphics[width=\columnwidth]{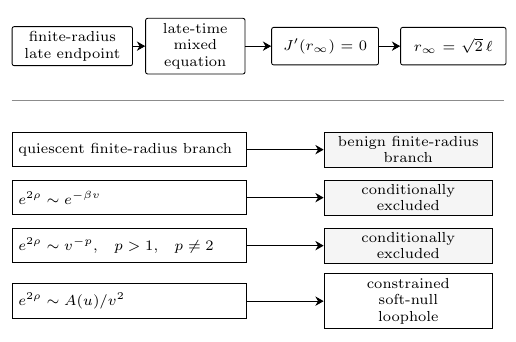}
\caption{Asymptotic endpoint classification in the dilaton-coupled
anomaly model. Finite-radius late endpoints are fixed by the late-time mixed
equation, which gives \(J'(r_\infty)=0\) and hence
\(r_\infty=\sqrt{2}\ell\) for the Bardeen model. Exponential null branches and
generic power-law branches with \(p>1\), \(p\neq2\), are conditionally excluded
under the state-tail and regularity assumptions. The remaining null
loophole is the constrained \(p=2\) soft-null branch.}
\label{fig:endpoint-classification}
\end{figure}

\subsection{The surviving $p=2$ sector}

We now return to the FFN model and impose the settled special-null ansatz
\begin{equation}
\rho=\frac12\ln A_0-\ln v+\frac{b_2(u)}{v^2}+O(v^{-3}),
\qquad
r=r_0+\frac{a_2(u)}{v^2}+O(v^{-3}),
\end{equation}
with the \(1/v\) coefficients in both fields set to zero. The leading coefficient \(A_0\) is taken constant,
defining the settled representative of the \(p=2\) family.~\footnote{A more
general \(A(u)\) can be absorbed locally by a residual reparametrization of the
outgoing null coordinate, or, in a fixed collapse-normalized frame, would
correspond to additional leading outgoing state data in the \(uu\) sector.}

We display the relevant coefficient relations explicitly, since the consistency
of this branch depends on correlations among the $vv$ constraint, the state
equations, and the $uu$ constraint.

From \eqref{eq:mixed-ffns}, the leading \(v^{-2}\) coefficient gives
\begin{equation}
J'(r_0)=0,
\qquad
r_0=\sqrt{2}\,\ell ,
\end{equation}
so the settled branch is tied to the same selected radius.
At the next order, the same mixed equation yields
\begin{equation}
\left(J_0+\frac{G}{4\pi r_0}\right)a_2'(u)+\frac{G}{12\pi}b_2'(u)=0,
\label{eq:a2b2}
\end{equation}
where $J_0:=J(r_0)$.

From \eqref{eq:vv-ffns},
\begin{equation}
t_v=\frac{\tau_4(u)}{v^4}+O(v^{-5}),
\qquad
T_{vv}=-\frac{2J_0 a_2(u)}{G\,v^4}+O(v^{-5}),
\end{equation}
so, up to the constant normalization of the affine parameter, the flux tends to the finite limit
\begin{equation}
T_{kk}=e^{-4\rho}T_{vv}\to -\frac{2J_0 a_2(u)}{G A_0^2}.
\end{equation}
The state equations then determine the asymptotic state tails required by the settled $a_2/v^2$ branch, for which
\begin{equation}
r_v=-\frac{2a_2}{v^3}+O(v^{-4}),
\qquad
r_{uv}=-\frac{2a_2'(u)}{v^3}+O(v^{-4}).
\end{equation}
Substituting into \eqref{eq:state-v} gives
\begin{equation}
\partial_u t_v+\frac{18}{r_0}\frac{a_2'(u)}{v^4}
+\frac{12}{r_0}\frac{a_2(u)s_\phi}{v^3}
+O(v^{-5})=0.
\end{equation}
Hence the dilaton-source function must decay along the endpoint. The weakest allowed decay is
\begin{equation}
s_\phi=\frac{\sigma_1(u)}{v}+O(v^{-2}),
\end{equation}
for which the leading nontrivial coefficient of \eqref{eq:state-v} becomes
\begin{equation}
\tau_4'(u)+\frac{18}{r_0}a_2'(u)+\frac{12}{r_0}a_2(u)\sigma_1(u)=0,
\label{eq:tau4prime}
\end{equation}
where
\begin{equation}
t_v=\frac{\tau_4(u)}{v^4}+O(v^{-5}).
\end{equation}

The $vv$ constraint algebraically fixes the coefficient $\tau_4$:
\begin{equation}
\tau_4
=
\frac{24\pi J_0}{G}a_2
+2b_2
-\frac{12}{r_0}a_2 .
\label{eq:tau4-algebraic}
\end{equation}
Differentiating and using \eqref{eq:a2b2} gives
\begin{equation}
\tau_4'=-\frac{18}{r_0}a_2' .
\end{equation}
Combining this with \eqref{eq:tau4prime} yields
\begin{equation}
a_2(u)\sigma_1(u)=0.
\label{eq:a2sigma-strong}
\end{equation}

The companion equation \eqref{eq:state-u} is equally informative. Since
\begin{equation}
r_u=\frac{a_2'(u)}{v^2}+O(v^{-3}),
\end{equation}
Eq.~\eqref{eq:state-u} implies
\begin{equation}
\partial_v t_u
-\frac{6}{r_0}\,
\frac{a_2''(u)+a_2'(u)\sigma_1(u)}{v^3}
+O(v^{-4})=0 .
\end{equation}
Thus the leading state tail must be
\begin{equation}
t_u=\frac{\eta_2(u)}{v^2}+O(v^{-3}),
\qquad
\eta_2(u)=-\frac{3}{r_0}\big(a_2''(u)+a_2'(u)\sigma_1(u)\big).
\label{eq:eta2}
\end{equation}

The \(uu\) constraint must then be combined with the local FFN
\(-\rho_{uu}\) term, which contributes at the same order as the state tail. At
order \(v^{-2}\) one finds
\begin{equation}
J_0 a_2''(u)=\frac{G}{12\pi}\bigl(\eta_2(u)-b_2''(u)\bigr).
\label{eq:a2uu}
\end{equation}
Differentiating \eqref{eq:a2b2} gives
\begin{equation}
b_2''(u)=-\frac{12\pi}{G}\left(J_0+\frac{G}{4\pi r_0}\right)a_2''(u).
\label{eq:b2dprime}
\end{equation}
Substituting \eqref{eq:eta2} and \eqref{eq:b2dprime} into \eqref{eq:a2uu} yields
\begin{equation}
a_2'(u)\sigma_1(u)=0.
\label{eq:a2sigma}
\end{equation}
Together, \eqref{eq:a2sigma-strong} and \eqref{eq:a2sigma} imply that any
nonzero settled \(a_2/v^2\) branch requires \(\sigma_1(u)=0\), and hence
\(s_\phi=O(v^{-2})\). If \(a_2\) vanishes identically, the affine-flux coefficient also vanishes and
the branch degenerates to the trivial settled solution. Thus a nontrivial settled \(p=2\) branch is asymptotically
consistent only with the correlated state-tail conditions
\eqref{eq:tau4-algebraic}, \eqref{eq:tau4prime}, and
\eqref{eq:eta2}.~\footnote{The corresponding \(a_1/v\) analysis is derived in
Appendix~\ref{app:a1}. It imposes stronger asymptotic conditions on the state
tails.}

\section{Collapse-normalized state data}

The one-sided collapse setup matches to a flat interior with
\begin{equation}
\rho=0,
\qquad
r=\frac{v-u}{2},
\qquad
r_u=-\frac12,
\qquad
r_v=\frac12,
\qquad
r_{uv}=0.
\end{equation}
With no incoming radiation in this initial flat region, the state equations force
\begin{equation}
s_\phi=0.
\end{equation} 

Nevertheless, the exact state equations show that nontrivial state tails are generated dynamically once the geometry becomes nontrivial. Written in integral form,
\begin{equation}
t_v(u,v)=t_v(u_0,v)-3\int_{u_0}^{u}du'\,\partial_v\!\left(\frac{r_{u'v}}{r}\right)
+6\int_{u_0}^{u}du'\,\frac{r_v}{r}s_\phi,
\end{equation}
\begin{equation}
t_u(u,v)=t_u(u,v_0)-3\int_{v_0}^{v}dv'\,\partial_u\!\left(\frac{r_{uv'}}{r}\right)
+6\int_{v_0}^{v}dv'\,\frac{r_u}{r}s_\phi.
\end{equation}
For the settled $a_2$ branch the exact asymptotic falloffs are consistent with these integral relations. There is therefore no evident asymptotic scaling mismatch: the collapse boundary conditions do not eliminate the settled $a_2/v^2$ branch at the level of these local late-time expansions.

In a collapse-normalized state regular on the initially flat region, the state
functions are fixed nonlocally by the state equations and initial data, rather
than freely tunable at late times. The hypotheses used in
Proposition~\ref{prop:scc-exclusion} are
state-selection conditions: a leading \(t_v\) tail capable of canceling the
Polyakov-type term would represent a correlated incoming-state contribution.
Deriving or excluding such a tail from global collapse data is the nonlocal
state-selection problem we leave open.

\section{Discussion}

Our analysis gives a local asymptotic reclassification of the BFK endpoint
problem once the Polyakov description is replaced by a candidate
dilaton-coupled anomaly model. The condition \(J'(r_\infty)=0\) is robust
within the one-parameter anomaly family, while
Proposition~\ref{prop:scc-exclusion} conditionally excludes the ordinary
SCC-restoring exponential branch and generic \(p\neq2\) power-law branches
under our given regularity and state-tail assumptions. Thus the detailed
no-horizon Polyakov endpoint picture is sensitive to the semiclassical matter
model.

In the FFN model, the settled \(a_2/v^2\) branch is the least tuned
null realization. When nonzero, it is tied to the same selected radius, carries
finite rather than divergent affine flux, and requires correlated state-sector
tails with \(s_\phi=O(v^{-2})\). The more rigid \(a_1/v\) branch remains logically possible at the level of
local asymptotics, but only under substantially stronger tail constraints. The serious remaining
late-time alternatives are a benign finite-radius branch and a highly
constrained soft-null branch. From this local perspective, the former is the natural candidate for the physical evaporation endpoint, while
the latter appears as a state-selected loophole whose global
realization must be tested.

Several open questions remain. Determining whether the anti-trapped or white-hole-like late phase found in
the Polyakov evolution persists requires a global analysis of the horizon
structure in the dilaton-coupled anomaly model. A more complete treatment of the nonlocal state-selection problem might both justify the required decay of $t_v$ for the excluded branches and decide between the benign and soft-null branches. It would be valuable to compare the present analytic classification with other regular two-dimensional black-hole models and with broader discussions of remnants in semiclassical dilaton gravity~\cite{Callan:1992rs,Hayward:2005gi,Cadoni:2023tse,Barenboim:2025bfk}.

The remnant-like branch naturally connects to scenarios in
which nonsingular primordial black holes, or their evaporation remnants, may significantly
contribute to the dark matter~\cite{Easson:2002tg,Davies:2024nxbh,Calza:2024fzo,Calza:2024xdh,Calza:2025mwn,Carr:2021bzv}. In the present work, however, we establish only
the local asymptotic endpoint structure in a candidate anomaly-driven model.
Stability, abundance, production history, and cosmological constraints lie outside the scope of our analysis.

Taken conservatively, our analysis shows that the Polyakov endpoint picture is
not obviously robust once the more faithful dilaton-coupled anomaly structure is
taken seriously. More provocatively, if future work demonstrates that
collapse-normalized state selection eliminates the remaining \(p=2\) soft-null
loophole, our classification would suggest the finite-radius branch as
the physical evaporation endpoint. If combined with a perturbative stability
analysis, the result would provide strong evidence for a quantum-selected
stable remnant as the endpoint of nonsingular black-hole evaporation.

\acknowledgments
This work was supported by the U.S. Department of Energy,
Office of High Energy Physics, under Award Number DE-SC0019470.

\appendix

\section{The more rigid \texorpdfstring{$a_1/v$}{a1/v} branch}
\label{app:a1}

In this appendix we present the corresponding asymptotic analysis for the more rigid
\begin{equation}
r-r_0\sim \frac{a_1(u)}{v}
\end{equation}
null branch. While not excluded outright by the local asymptotic equations, it is highly constrained compared with the settled $a_2/v^2$ branch.

We take
\begin{equation}
e^{2\rho}\sim \frac{A(u)}{v^2},
\qquad
\rho=\frac12\ln A(u)-\ln v+\frac{b_1(u)}{v}+O(v^{-2}),
\end{equation}
and
\begin{equation}
r=r_0+\frac{a_1(u)}{v}+O(v^{-2}).
\end{equation}
Then
\begin{equation}
r_v=-\frac{a_1(u)}{v^2}+O(v^{-3}),
\qquad
r_{uv}=-\frac{a_1'(u)}{v^2}+O(v^{-3}),
\qquad
\rho_{uv}=-\frac{b_1'(u)}{v^2}+O(v^{-3}).
\end{equation}

Substituting into \eqref{eq:mixed-ffns} gives at order $v^{-2}$
\begin{equation}
-\left(J_0+\frac{G}{4\pi r_0}\right)a_1'(u)
-\frac{G}{12\pi}b_1'(u)
+\frac14 A(u)J'(r_0)=0.
\label{eq:a1-mixed}
\end{equation}
Unlike the settled \(a_2\) branch, the \(a_1\) branch does not by itself impose
\(J'(r_0)=0\). That conclusion would require additional settling conditions on
the \(1/v\) coefficients.

The exact $vv$ constraint \eqref{eq:vv-ffns} is more restrictive. Since
\begin{equation}
r_{vv}-2\rho_v r_v = O(v^{-4}),
\end{equation}
the leading $v^{-3}$ coefficient must vanish. The local dilaton term contributes
\begin{equation}
\frac{G}{2\pi}\frac{\rho_v r_v}{r}
=
\frac{G}{2\pi}\frac{a_1(u)}{r_0\,v^3}
+O(v^{-4}),
\end{equation}
so consistency requires
\begin{equation}
t_v=\frac{\tau_3(u)}{v^3}+O(v^{-4}),
\qquad
\tau_3(u)=-\frac{6a_1(u)}{r_0}.
\label{eq:a1-tau3}
\end{equation}

The first state equation \eqref{eq:state-v} now gives
\begin{equation}
\partial_u t_v
+\frac{6a_1(u)}{r_0}\frac{s_\phi}{v^2}
+\frac{6a_1'(u)}{r_0\,v^3}
+O(v^{-4})=0.
\end{equation}
Thus the dilaton-source function cannot approach a nonzero constant if $a_1\neq0$. Write instead
\begin{equation}
s_\phi=\frac{\sigma_1(u)}{v}+O(v^{-2}).
\end{equation}
Then the $v^{-3}$ coefficient gives
\begin{equation}
\tau_3'(u)+\frac{6a_1(u)}{r_0}\sigma_1(u)+\frac{6a_1'(u)}{r_0}=0.
\end{equation}
Using \eqref{eq:a1-tau3}, this reduces to \(a_1(u)\sigma_1(u)=0\). Hence, on
a genuine \(a_1\) branch with \(a_1\neq0\), one has
\(\sigma_1(u)=0\), so the dilaton-source function must in fact decay faster,
\(s_\phi=O(v^{-2})\).

The companion state equation \eqref{eq:state-u} then implies
\begin{equation}
r_u=\frac{a_1'(u)}{v}+O(v^{-2}),
\end{equation}
and therefore
\begin{equation}
\partial_v t_u-\frac{3a_1''(u)}{r_0\,v^2}+O(v^{-3})=0.
\end{equation}
Integrating gives
\begin{equation}
t_u=t_u^{(0)}(u)-\frac{3a_1''(u)}{r_0\,v}+O(v^{-2}),
\label{eq:a1-tu-int}
\end{equation}
where $t_u^{(0)}(u)$ is the integration function.

Finally, in the gauge where the leading $p=2$ coefficient is fixed to a constant, the $uu$ constraint removes the $O(v^0)$ piece. The \(O(v^0)\) term forces \(t_u^{(0)}(u)=0\).
At order $v^{-1}$, however, the local  FFN term $-\rho_{uu}$  contributes at the same order as the $t_u$ tail. Writing
\begin{equation}
t_u=\frac{\eta_1(u)}{v}+O(v^{-2}),
\qquad
\eta_1(u)=-\frac{3a_1''(u)}{r_0},
\label{eq:a1-eta1}
\end{equation}
the $uu$ equation gives
\begin{equation}
J_0 a_1''(u)=\frac{G}{12\pi}\bigl(\eta_1(u)-b_1''(u)\bigr).
\label{eq:a1-uu}
\end{equation}
Differentiating \eqref{eq:a1-mixed} with $A(u)=A_0$ constant gives
\begin{equation}
-\left(J_0+\frac{G}{4\pi r_0}\right)a_1''(u)-\frac{G}{12\pi}b_1''(u)=0.
\label{eq:a1-mixed-deriv}
\end{equation}
Using \eqref{eq:a1-eta1}, the relation \eqref{eq:a1-uu} becomes an identity once \eqref{eq:a1-mixed-deriv} is imposed. Thus the local asymptotic equations do not impose \(a_1''(u)=0\), so no
linearity condition \(a_1(u)=\alpha u+\beta\) follows.

We conclude that the $a_1/v$ branch is significantly more rigid than the settled $a_2/v^2$ branch in its required tail structure. It requires
\begin{equation}
t_v=\frac{\tau_3(u)}{v^3}+O(v^{-4}),
\qquad
s_\phi=O(v^{-2}),
\qquad
t_u=\frac{\eta_1(u)}{v}+O(v^{-2}),
\end{equation}
while its coefficient function remains constrained by \eqref{eq:a1-mixed} rather than being fixed to be linear. Thus the $a_1/v$ channel remains logically possible, but only in a more constrained form than the settled $a_2/v^2$ branch. This is why we regard the $a_1$ channel as more tuned and less natural than the settled $a_2$ alternative.

\bibliography{endpoint_refs}

\end{document}